\documentclass[conference, a4paper]{IEEEtran}
\usepackage[left=13mm,right=13mm,top=19mm,bottom=43mm]{geometry}
\usepackage{epsf}
\usepackage{graphicx}
\usepackage{amsmath,bm}
\usepackage{amssymb}
\usepackage{epsfig,latexsym,amsmath,epsf,amssymb,amsfonts}
\usepackage{verbatim}
\usepackage{placeins}
\usepackage[hang]{subfigure}
\usepackage{epstopdf}
\usepackage{multirow}
\usepackage{cite} 
\usepackage{amsthm}
\usepackage{url}
\usepackage{cases}
\usepackage[utf8]{inputenc}
\usepackage{chngcntr}
\usepackage{tabu}
\usepackage[linesnumbered,ruled]{algorithm2e}
\usepackage[justification=centering]{caption}
\usepackage{siunitx}

%%%%%%%%%%%%%%%%%%%%%%%%%%%%%%%%%%%%%%%%%%%%%
% \setlength{\textfloatsep}{1pt}
 %\usepackage[belowskip=-5pt,aboveskip=0pt]{caption}
\allowdisplaybreaks
\usepackage{color}
%%%%%%%%%%%%%%%%%%%%%%%%%%%%%%%%%%%%%%%%%%%%%
\newtheorem{lma}{Lemma}
\newtheorem{thm}{Theorem}
\newtheorem{corl}{Corollary}
\newtheorem{remarkl}{Remark}
\DeclareMathOperator{\E}{\mathbb{E}}
\DeclareMathOperator{\Prb}{\mathbb{P}}
%%%%%%%%%%%%%%%%%%%%%%%%%%%%%%%%%%%%%%%%%%%%%%%

\begin{document}

\title{Coverage and Rate Analysis for Unmanned Aerial Vehicle Base Stations with LoS/NLoS Propagation}
\author{Mohamed Alzenad and~Halim~Yanikomeroglu\\% <-this % stops a space
Department of Systems and Computer Engineering, Carleton University, Ottawa, ON, Canada\\
Email: \{mohamed.alzenad, halim\}@sce.carleton.ca
}

\maketitle

\begin{abstract}
The use of unmanned aerial vehicle base stations (UAV-BSs) as airborne base stations has recently gained great attention. In this paper, we model a network of UAV-BSs as a Poisson point process (PPP) operating at a certain altitude above the ground users. We adopt an air-to-ground (A2G) channel model that incorporates line-of-sight (LoS) and non-line-of-sight (NLoS) propagation. Thus, UAV-BSs can be decomposed into two independent inhomogeneous PPPs. %We show that %as the height increases, users are more likely to be served by a LoS UAV-BS. Furthermore, 
 Under the assumption that NLoS and LoS channels experience Rayleigh and Nakagami-m fading, respectively, we derive approximations for the coverage probability and average achievable rate, and show that these approximations match the simulations with negligible errors. Numerical simulations have shown that the coverage probability and average achievable rate decrease as the height of the UAV-BSs increases.
\end{abstract}

\begin{IEEEkeywords}
unmanned aerial vehicles, drone, coverage, stochastic geometry.
\end{IEEEkeywords}
%\IEEEpeerreviewmaketitle

\section{introduction} 
Flexible and easy-to-deploy solutions to provide wireless connectivity are of vital importance in current and future wireless systems. Therefore, 
the use of unmanned aerial vehicle base stations (UAV-BSs) to enhance coverage or boost capacity has recently attracted great attention \cite{cao2018airborne,irem2018spatial}. UAV-BSs can assist the terrestrial wireless network in a variety of scenarios. % that require flexibility, quick deployment and good channel conditions. 
For example, UAV-BSs can be quickly deployed during the aftermath of a natural disaster or to offload traffic from a congested terrestrial BS during a sports event\cite{alzenad2018fso,zeng2016wireless,IremMagazine}. 
%Modeling and analyzing UAV-assisted communications is challenging. While terrestrial networks are mainly modeled as 2D spatial point process where the users and access points are located on the same 2D plane, UAV-BSs are deployed above the users at some altitude.
%Despite its numerous advantages, UAV-assisted communication is facing many challenges. Unlike terrestrial networks, UAV-BSs are deployed above the users at some altitude. Therefore, the impact of the height of UAV-BSs on the network performance (e.g., coverage probability and average rate) should be studied. Tools from stochastic geometry have been widely used to analyze terrestrial networks. However, modeling a UAV-assisted network is completely different from a terrestrial network. In particular, a UAV-assisted network can no longer be modeled as a 2D spatial point process due to the height dimension. %In this paper, we study the performance of a UAV-assisted network (coverage probability and average rate)
Recently, there have been several works on UAV-BS deployment, e.g.,\cite{alzenad20173,alzenad20173d,Elham2017Backhaul,MozaffariEfficient}. The authors in \cite{alzenad20173} proposed a framework for evaluating the 3D location of the UAV-BS that maximizes the number of covered users using minimum transmit power while the work in \cite{alzenad20173d} investigated the 3D placement problem for different QoS requirements. In \cite{Elham2017Backhaul}, the authors developed a grid search algorithm to address a backhaul-aware 3D UAV-BS placement problem.  A framework for 3D UAV-BSs deployment based on circle packing was proposed in \cite{MozaffariEfficient}. Moreover, the authors in \cite{MozaffariEfficient} derived the coverage probability as a function of altitude and antenna gain. However, the work in \cite{alzenad20173,alzenad20173d,Elham2017Backhaul,MozaffariEfficient} aimed at finding the exact 3D location which may be unnecessary and difficult to obtain. 

Stochastic geometry has been widely used to model and analyze terrestrial wireless networks. However, a handful of works adopted such approach for UAV-assisted networks. An exact analytical expression for the coverage probability of uniformly distributed UAV-BSs was derived in \cite{galkin2017coverage}. This work adopted a terrestrial channel model for the A2G channels and assumed that all wireless links are subject to Nakagami-m fading. The work in \cite{chetlur2017downlink} modeled the UAV-BSs as a 2D Binomial point process (BPP) in a disc located at a fixed altitude. The authors assumed that all the UAV-BSs are in LoS condition with the users and hence Nakagami-m fading was assumed for all wireless links. Additionally, the thermal noise was assumed negligible in comparison to interference (interference-limited scenario). The exact coverage probability, and accurate coverage probability approximation for Nakagam-m and fading-free channels were also derived.  %Additionally in \cite{ chetlur2017downlink}, the authors derived an accurate coverage probability approximation using a dominant interferer approach. 
The authors in \cite{zhang2017spectrum} investigated spectrum sharing between UAV-BSs and a terrestrial cellular network using tools from stochastic geometry. The UAV-BSs were modeled as a 3D PPP with a minimum height while the terrestrial cellular network was assumed to form a 2D PPP. Additionally in \cite{zhang2017spectrum}, it was assumed that all the UAV-BSs undergo Rayleigh fading which is justified for NLoS transmissions. A network comprised of a single terrestrial BS and a single UAV-BS was investigated in \cite{zhou2018uplink}. The authors derived analytical expressions for the uplink coverage probability of a terrestrial BS and a UAV-BS.

\textit{Contributions}: We adopt an A2G channel model that captures both LoS and NLoS transmissions. Although Rayleigh fading assumption is common for NLoS channels, it may not be for LoS channels. %In fact, Rician fading, which can be approximated by Nakagami-m distribution, is widely accepted for LoS transmissions. 
Therefore, we adopt the Nakagami-m distribution for LoS channels. We derive the distribution of the distances from the typical user to the closest NLoS and LoS UAV-BSs. %\textcolor{red}{We also derive analytical expression for the distribution of the power received from the serving UAV-BS}.
 %After that, we derive a closed-form expression for the Laplace transform of the aggregated interference power as a function of UAV-BSs altitude and density. Due to the complexity of evaluating the analytical coverage probability and average achievable data rate under the assumption of Nakagami-m fading, we derive tractable approximations of the coverage probability and average achievable data rate using the bounds on incomplete Gamma function. We interestingly show that the approximate coverage probability and average achievable data rate match the simulations with negligible errors. 
 After that, we derive a closed-form expression for the Laplace transform of the aggregated interference power as a  function of the altitude and density of UAV-BSs. Unlike the works \cite{galkin2017coverage,chetlur2017downlink} in which the evaluation of the coverage probability involves finding m numerical derivatives of the Laplace transform of the interference, we derive tractable approximations for the coverage probability and average achievable data rate using bounds on incomplete Gamma function. We show that the approximate coverage probability and average achievable data rate match the simulations very closely. 

\section{system model}
We consider a network of UAV-BSs and focus on the analysis of the downlink performance. The UAV-BSs are assumed to be uniformly distributed on an infinite plane located at some altitude $h$ [m] as depicted in Fig. \ref{fig:SystemModel}. We assume that the UAV-BSs form a homogeneous PPP, denoted by $\Phi\overset{\Delta}{=}\{x_i\}$, with density $\lambda$ [BS/$\textup {km}^2$] where $x_i$ refers to the 3D location of the UAV-BS $i$. Also, we assume that all the UAV-BSs transmit at the same power $P_t$ and %\textcolor{red}{a frequency reuse of 1 is used. This implies that the UAV-BSs interfere with each other. However, within a cell, we assume that intra-cell interference does not occur}. Without loss of generality, we consider a typical user located at the origin $O$.                                             
a frequency reuse of 1 is used. This implies that the UAV-BSs interfere with each other. However, within a cell, we assume that orthogonal transmission is implemented which implies that intra-cell interference does not occur. Thus, the typical user does not receive interference signals from its serving BS. Without loss of generality, we consider a typical user located at the origin $O$.
\begin{table}[t!]
%\begin{table}[h]
\renewcommand{\arraystretch}{1.2}
\caption{{Notation and Symbols Summary}}
\label{Table:Notation}
\newcommand{\tabincell}[2]{\begin{tabular}{@{}#1@{}}#2\end{tabular}}
\centering
%\begin{tabularx}{\textwidth}{|X|X|X|}
\resizebox{\columnwidth}{!}{
\begin{tabular}{|c|c|}
\hline
{Notation} & {Description}  \\
\hline
{PPP} &{\tabincell{c}{Poisson point process}}  \\
\hline
{A2G} &{\tabincell{c}{Air-to-Ground}}  \\
\hline
{h} & {\tabincell{c}{Height of UAV-BSs}}  \\
\cline{1-2}
{$\Phi;\lambda$} &{\tabincell{c}{PPP of UAV-BSs; density of UAV-BSs}}  \\
\hline
{$x_i;x_o$} &{\tabincell{c}{3D location of UAV-BS $i$; 3D location of serving UAV-BS}}  \\
\hline
{$\Phi^N;\Phi^L$} &{\tabincell{c}{PPP of NLoS UAV-BSs; PPP of LoS UAV-BSs}}  \\
\hline
{$P_N(z);P_L(z)$} & {Probability of NLoS; probability of LoS} \\
\hline
{$m$} & {\tabincell{c}{ Parameter of Nakagami-m distribution for LoS links}}  \\
\hline
{$D_{N,x_i},D_{L,x_i}$} &  {\tabincell{c}{Distance between the typical user and a NLoS UAV-BS,\\ or a LoS UAV-BS located at point $x_i$, respectively}} \\
\hline
% {$D_{L,x_i}$} &  {\tabincell{c}{Distance between the typical user and a LoS UAV-BS\\ located at point $x_i$}} \\
% \hline
{$H_{x_i}$} &  {\tabincell{c}{Channel power gain between the typical user \\and a NLoS  UAV-BS located at point $x_i$}} \\
\hline
{$G_{x_i}$} &  {\tabincell{c}{Channel power gain between the typical user \\and a LoS UAV-BS located at point $x_i$}} \\
\hline
{$\alpha_N,\alpha_L$} & {\tabincell{c}{ Path loss exponent for NLoS, and LoS links, respectively }} \\
\hline
{$P_t;\sigma^2$} &{\tabincell{c}{Transmit power of UAV-BSs; thermal noise power}}  \\
\hline
{$\eta_N,\eta_L$} &{\tabincell{c}{Additional losses for NLoS, and LoS links, respectively}}  \\
\hline
{$R_N,R_L$} & {\tabincell{c}{Distance between the typical user and the closest \\NLoS, and LoS UAV-BS, respectively}}  \\
\hline
{$f_{R_N}(r),f_{R_L}(r)$} & {\tabincell{c}{Distribution of the distance between the typical user and \\ the closest NLoS, and LoS UAV-BS, respectively}}  \\
\hline
{$I;\mathcal{L}_I(s|r)$} & {\tabincell{c}{ Interference; Laplace transform of interference at $s$}} \\
\hline
{$A_N,A_L$} &{\tabincell{c}{Probability that the typical user is associated with \\a NLoS UAV-BS, or a LoS UAV-BS, respectively}}  \\
\hline
{$P_C;\tau$} & {Probability of coverage; average downlink rate}  \\
\hline
{$P_{C,N},P_{C,L}$} &{\tabincell{c}{Coverage probability given that the typical user is associated \\with a NLoS, or a LoS UAV-BS, respectively}}  \\
\hline
{$T$} & {\tabincell{c}{\textsf{SINR} threshold for successful communication}}  \\
\hline
{$\tau_N,\tau_L$} &{\tabincell{c}{Average rate given that the typical user is associated with\\ a NLoS, or a LoS UAV-BS, respectively}}  \\
\hline
\end{tabular}
}
%\end{tabularx}
\end{table}  
\subsection{Channel Model}
The links between the UAV-BSs and the ground users are mainly LoS or NLoS \cite{Hourani}. For a given altitude $h$, the occurrence of LoS and NLoS transmissions can be captured using the probability of LoS transmission, denoted by $P_L(z)$, and the probability of NLoS transmission, denoted by  $P_N(z)$, where \cite{Hourani} 
 
 \begin{equation}\label{Eq:Prob}
P_L(z)=\frac{1}{1+a \exp (-b(\frac{180}{\pi}\tan^{-1}(\frac{h}{z}) -a))},
\end{equation} 
where $a$ and $b$ are constants that depend on the environment, and $z$ denotes the Euclidean horizontal distance between the typical user and the projection of the UAV-BS location on the horizontal plane. Furthermore, the probability of NLoS is $P_N(z)=1-P_L(z)$.%Note that since all the UAV-BSs are deployed at the same altitude $h$, LoS and NLoS probabilities are functions of $z$. Therefore, we hereafter use $P_N(z)$ and $P_L(z)$.
\begin{figure}
\begin{center}
\includegraphics[ height=5cm, width=8cm]{./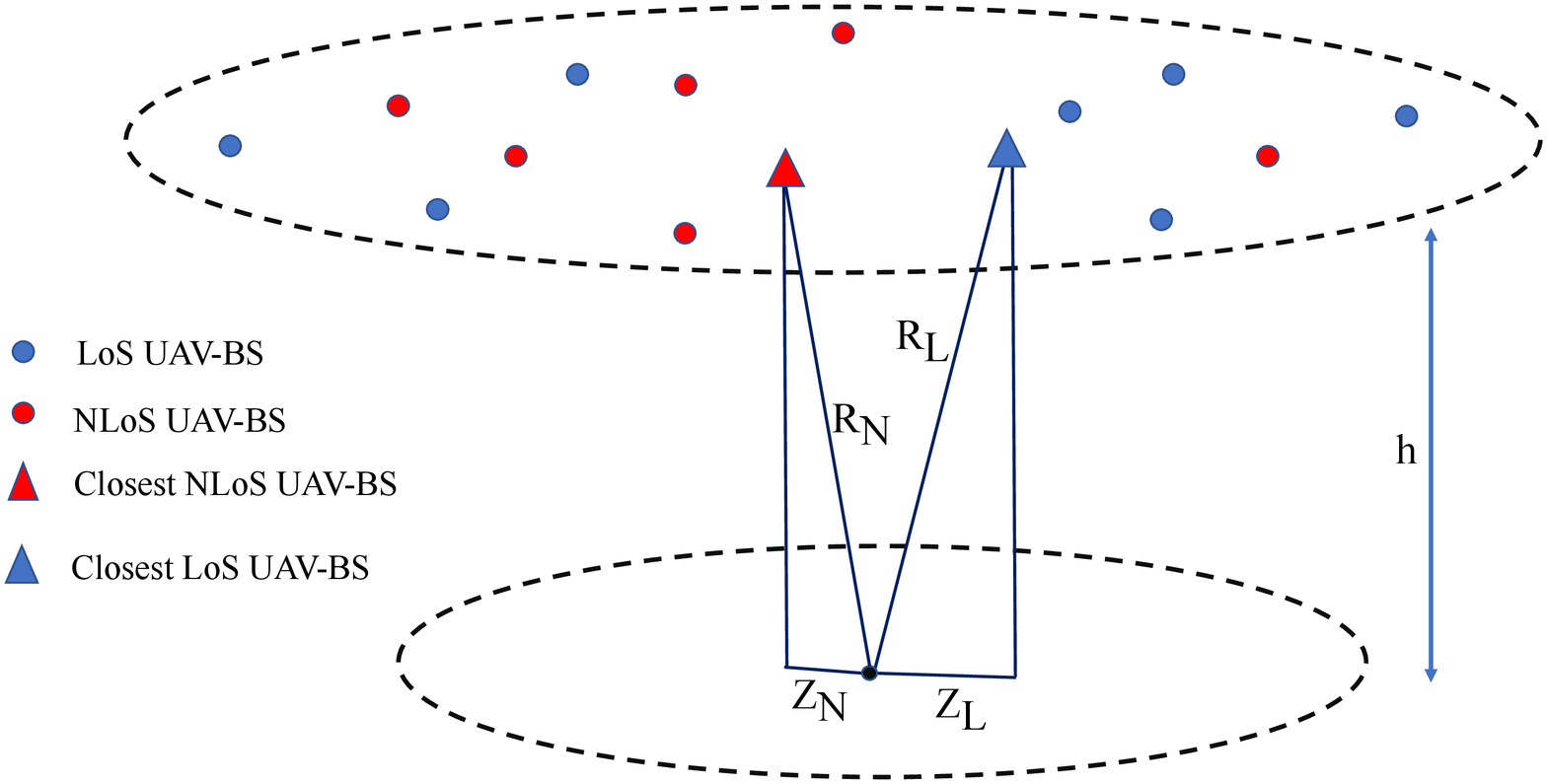}
\caption{\footnotesize Illustration of the system model.}
\label{fig:SystemModel}
\end{center}
\end{figure}

In our model, we assume that each UAV-BS is either in a LoS or NLoS condition with the typical user and that LoS and NLoS transmissions are independent from each other. This implies that the set of UAV-BSs $\Phi$ can be decomposed into two independent inhomogeneous PPPs, i.e., $\Phi=\Phi^L\cup \Phi^N$, where $\Phi^L$ and $\Phi^N$ denote the set of LoS and NLoS UAV-BSs, respectively. Note that the resultant PPPs ($\Phi^L$ and $\Phi^N$) are inhomogeneous because $P_L(z)$ and $P_N(z)$ are functions of $z$. Clearly, for a given altitude $h$, a UAV-BS with a large horizontal distance $z$ is more likely to be in a NLoS condition with the typical user. 

We assume that NLoS and LoS transmissions are characterized by different small scale fading. In particular, we assume that the fading loss, denoted by $H_{x_i}$, between a NLoS UAV-BS located at point $x_i$ and the typical user is exponentially distributed (Rayleigh fading), i.e., $H_{x_i} \sim \exp(1), \forall x_i\in \Phi^N$. %For LoS  links, it is common to model small scale fading by the well known Rician distribution. However,
For LoS transmissions, we choose the well known Nakagami-m distribution  with the shape parameter $m$ which can capture a wide range of fading scenarios. As a result, the channel fading power gain for LoS links, denoted by $G_{x_i}$, follows Gamma distribution with probability density function given by \cite{wackerly2007mathematical}

 \begin{equation}\label{Eq:GammaDistr}
f_{G_{x_i}}(g)=\frac{m^mg^{m-1}}{\Gamma(m)}e^{-mg}, \quad \forall x_i\in\Phi^L,
\end{equation} 
where $\Gamma(m)$ is the Gamma function given by ${\Gamma(m)=\int_{0}^{\infty} x^{m-1}e^{-x}dx}$.% We assume that LoS and NLoS fadings are i.i.d. Therefore, we drop the subscript $x_i$.

Let $\eta_{\textup N}$ and $\eta_{\textup L}$ denote the mean additional losses for NLoS and LoS transmissions, respectively \cite{Hourani}. The received power at the typical user from a UAV-BS located at point $x_i$ is given by
\begin{equation}\label{Eq:LoSNLoSRecPow}
P_{x_i}=
\begin{cases}
  \zeta_N H_{x_i} D_{N,x_i}^{-\alpha_{N}},  \hspace{10pt} \forall x_i\in\Phi^N\nonumber \\
  \zeta_L G_{x_i} D_{L,x_i}^{-\alpha_{L}}\hspace{3pt}, \hspace{10pt} \forall x_i\in\Phi^L,
\end{cases}
\end{equation}
where $\zeta_N=P_t\eta_N$, and $\zeta_L=P_t\eta_L$. Also, $D_{N,x_i}$ and $D_{L,x_i}$ are the distances between a UAV-BS located at point $x_i$ and the typical user for NLoS and LoS transmissions, respectively. Finally, $\alpha_N$ and $\alpha_L$ are the path loss exponents for NLoS and LoS transmissions, respectively.
The notation and symbols used in this paper are summarized in Table \ref{Table:Notation}.
\subsection{\textsf{SINR} and UAV-BS Association}\label{section:AssociationRule}
The signal-to-interference-plus-noise ratio (\textsf{SINR}) at the typical user when it is associated with a UAV-BS located at $x_o \in \{ \Phi^{\textup {LoS}},\Phi^{\textup {NLoS}}\}$ is given by 
\begin{equation}\label{Eq:SINR}
    \textsf{SINR}=
    \begin{cases}
      \frac{\zeta_N H_{x_o} R_N^{-\alpha_N} }{\sigma^2+I}, & \textup {if} \quad x_o \in \Phi^N \\
      \frac{\zeta_L G_{x_o} R_L^{-\alpha_L} }{\sigma^2+I}, & \textup {if} \quad x_o \in \Phi^L,
    \end{cases}
  \end{equation}
where $R_N$ and $R_L$ are the distances between the serving UAV-BS and the typical user for NLoS and LoS transmissions, respectively, and $\sigma^2$ is the additive white Gaussian noise (AWGN) power. Finally, $I$ is the aggregate interference power defined as
\begin{equation}\label{Eq:Interference}
I=\sum\limits_{x_i\in\phi^N/{x_o}} \zeta_N H_{x_i} D_{N,x_i}^{-\alpha_N}+\sum\limits_{x_i\in\phi^L/{x_o}} \zeta_L G_{x_i} D_{L,x_i}^{-\alpha_L}.
\end{equation}

For the association criteria, we assume that the typical user is associated with the UAV-BS that provides the strongest average \textsf{SINR}. % This is motivated by the fact that cell association is often based on long term statistics of the channel. 
The closest UAV-BS does not necessarily provide the strongest \textsf{SINR} due to the differences in path loss parameters between LoS and NLoS transmissions. In particular, a LoS UAV-BS may provide a stronger average \textsf{SINR} than that provided by a closer NLoS UAV-BS due to the fact that $\eta_L>\eta_N$ and $\alpha_L< \alpha_N$. Moreover, an interfering UAV-BS may provide a higher instantaneous \textsf{SINR} for the typical user than that provided by the serving UAV-BS because of a higher small scale fading in comparison to that experienced by the serving UAV-BS.

Based on the strongest average \textsf{SINR} association scheme and the assumption that $\E[H_{x_i}]=\E[G_{x_i}]=1,\hspace{2pt} \forall x_i\in\Phi$, the serving UAV-BS can be written as 
\begin{equation}
x_o=\textup {arg} \max \left\{\eta_N R_N^{-\alpha_N},\eta_L R_L^{-\alpha_L}\right\},
\end{equation}
where $R_N=\min\limits_{\forall x_i\in \Phi^N} D_{N,x_i}$, and $R_L=\min\limits_{\forall x_i\in \Phi^L} D_{L,x_i}$.
% \section{Coverage probability}
% The coverage probability is generally defined as the probability that the \textsf{SINR} is greater than a designated threshold $T$: 
% \begin{equation}
% P_C=\Prb(\textsf{SINR}> T).
% \end{equation}

% Before proceeding with developing an expression for the coverage probability, we derive expressions for the distribution of $R_N$ and $R_L$. We also derive expressions for the NLoS and LoS association probabilities.
\section{Relevant Distance Distributions and Association Probabilities}
In this section, we provide the distribution of the distances between the typical user and the closest UAV-BS for NLoS and LoS transmissions. Furthermore, we characterize the location of the closest interfering NLoS and LoS UAV-BSs given that the typical user is associated with a NLoS or a LoS UAV-BS. Finally, we derive expressions for the association probabilities.
\begin{lma}\label{lma:NearstLoSNLoSDistances}
The probability density function of the distances between the typical user and the closest NLoS and LoS UAV-BSs, denoted by $f_{R_N}(r)$ and $f_{R_L(r)}$, respectively, are given by
\begin{align}
f_{R_N}(r)&=2\pi\lambda r P_N(r)\exp \bigg(-2\pi\lambda\int_{0}^{l(r)}z P_N(z)dz\bigg)\label{Eq:DistrbtionOfR_N}\\
f_{R_L}(r)&=2\pi\lambda r P_L(r)\exp\bigg(-2\pi\lambda\int_{0}^{l(r)}z P_L(z)dz\bigg), \label{Eq:DistrbtionOfR_L}
\end{align}
where $r\geq h$, $l(r)=\sqrt{r^2-h^2}$, $P_N(r)=1-P_L(r)$, and $P_L(r)=P_L(z)|_{z=\sqrt{r^2-h^2}}$.
\end{lma}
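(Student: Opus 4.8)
The plan is to derive each density from its corresponding cumulative distribution function by exploiting the nearest-neighbor (void) probability of an inhomogeneous PPP. First I would invoke the thinning theorem for Poisson point processes: since each UAV-BS is independently declared NLoS with probability $P_N(z)$ depending only on its horizontal distance $z$ to the projection of the typical user, the NLoS set $\Phi^N$ is an inhomogeneous PPP whose intensity on the horizontal plane, expressed in polar coordinates about the origin's projection, is $\lambda P_N(z)$. The analogous statement holds for $\Phi^L$ with $P_L(z)$, and by construction the two thinned processes are independent.

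The key geometric observation is that a UAV-BS lies within 3D (slant) distance $r$ of the typical user if and only if its horizontal distance satisfies $z\le l(r)=\sqrt{r^2-h^2}$, which is meaningful only for $r\ge h$. Hence the expected number of NLoS UAV-BSs within 3D distance $r$ is obtained by integrating the NLoS intensity over the disc of radius $l(r)$,
\begin{equation}
\Lambda_N(r)=2\pi\lambda\int_{0}^{l(r)} z\,P_N(z)\,dz. \nonumber
\end{equation}

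Next I would write the CDF of $R_N$ via the void probability. The event $\{R_N>r\}$ is exactly the event that no NLoS point falls within 3D distance $r$, i.e. that the disc of radius $l(r)$ is empty of NLoS points; for a Poisson process this probability equals $\exp(-\Lambda_N(r))$. Therefore $F_{R_N}(r)=1-\exp(-\Lambda_N(r))$, and the density follows by differentiation. The one step requiring care is evaluating $f_{R_N}(r)=\tfrac{d}{dr}F_{R_N}(r)$: by the chain rule the derivative of $\Lambda_N(r)$ brings down a factor $2\pi\lambda\,l(r)\,P_N(l(r))$ multiplied by $\tfrac{dl}{dr}$, and since $\tfrac{dl}{dr}=r/\sqrt{r^2-h^2}=r/l(r)$, the factor $l(r)$ cancels and leaves precisely $2\pi\lambda r\,P_N(r)$, where $P_N(r)$ abbreviates $P_N(z)\big|_{z=l(r)}$. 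Multiplying by the surviving exponential yields \eqref{Eq:DistrbtionOfR_N}.

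The derivation of $f_{R_L}(r)$ in \eqref{Eq:DistrbtionOfR_L} is identical with $P_N$ replaced throughout by $P_L$, using that the LoS thinning is governed by $P_L(z)$ and that $\Phi^L$ is itself an independent inhomogeneous PPP. I do not anticipate a genuine obstacle: the argument is a standard contact-distance computation for an inhomogeneous PPP, and the only delicate point is the change-of-variables cancellation $\tfrac{dl}{dr}=r/l(r)$ together with the support restriction $r\ge h$, which reflects the minimum slant distance imposed by the fixed altitude.
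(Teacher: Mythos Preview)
Your proposal is correct and follows essentially the same approach as the paper: compute the CDF of $R_N$ via the void (null) probability of the thinned PPP over the horizontal disc of radius $l(r)=\sqrt{r^2-h^2}$, then differentiate. Your write-up is in fact more explicit than the paper's, spelling out the thinning argument and the chain-rule cancellation $\tfrac{dl}{dr}=r/l(r)$ that turns $2\pi\lambda\,l(r)P_N(l(r))$ into $2\pi\lambda r P_N(r)$.
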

\begin{proof}
See Appendix \ref{proof:NearstLoSNLoSDistances}.
\end{proof}

\begin{corl}\label{corl:HorizontalDistance}
Let $Z_N$ and $Z_L$ denote the horizontal distances between the typical user and the projections of the closest NLoS and LoS UAV-BSs on the horizontal plane, respectively. The probability density function of $Z_N$ and $Z_L$, denoted by $f_{Z_N}(z)$ and $f_{Z_L}(z)$, respectively, are given by 
\begin{align}
f_{Z_N}(z)&=2\pi\lambda z P_N(z) \exp\left(-2\pi\lambda\int_{0}^{z}tP_N(t)dt \right)\label{Eq:DistrbtionOfZ_N}\\
f_{Z_L}(z)&=2\pi\lambda z P_L(z)\exp\left(-2\pi\lambda\int_{0}^{z}t P_L(t)dt \right).\label{Eq:DistrbtionOfZ_L}
\end{align}
Proof. \textup {For} $Z_N$, \textup {we have}
\begin{align}\label{Eq:DistrbtonZL}
F_{Z_N}(z)&=\Prb(Z_N\leq z)\overset{(a)}{=}F_{R_N}(\sqrt{z^2+h^2}) \nonumber\\
&\overset{(b)}{=}1-\exp\left(-2\pi\lambda\int_{0}^{z}t P_N(t) dt \right ),
\end{align}
\end{corl}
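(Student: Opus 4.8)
The plan is to obtain each horizontal-distance density by transforming the corresponding closest-UAV distance law from Lemma \ref{lma:NearstLoSNLoSDistances} through the deterministic geometry imposed by the fixed altitude. Since every UAV-BS lies on the plane at height $h$, the slant distance and the horizontal distance are linked by $R_N=\sqrt{Z_N^2+h^2}$, which is strictly increasing in $z\geq 0$. I would exploit this monotonicity to write the CDF of $Z_N$ directly in terms of that of $R_N$: the event $\{Z_N\le z\}$ coincides with $\{R_N\le\sqrt{z^2+h^2}\}$, giving equality $(a)$ with no Jacobian needed at the CDF level.

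Next I would produce $F_{R_N}$ from Lemma \ref{lma:NearstLoSNLoSDistances}. Rather than integrate the density \eqref{Eq:DistrbtionOfR_N}, I would recognize that the exponential factor appearing there is exactly the survival function, so that $F_{R_N}(r)=1-\exp(-2\pi\lambda\int_0^{l(r)}zP_N(z)\,dz)$ with $l(r)=\sqrt{r^2-h^2}$. The key simplification is the substitution $r=\sqrt{z^2+h^2}$, under which $l(r)=\sqrt{(z^2+h^2)-h^2}=z$; this collapses the upper limit of the inner integral to $z$ and yields equality $(b)$, matching the displayed $F_{Z_N}(z)=1-\exp(-2\pi\lambda\int_0^z tP_N(t)\,dt)$.

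The final step is to differentiate this CDF in $z$. Applying the fundamental theorem of calculus to the inner integral and the chain rule to the exponential produces the factor $2\pi\lambda zP_N(z)$ multiplying the exponential, which is precisely \eqref{Eq:DistrbtionOfZ_N}; here the Jacobian of the transformation re-enters naturally through the derivative of the integral limit. The $Z_L$ case follows identically with $P_L$ in place of $P_N$, since $\Phi^L$ and $\Phi^N$ are independent inhomogeneous PPPs treated symmetrically in Lemma \ref{lma:NearstLoSNLoSDistances}.

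I do not expect a genuine obstacle, as the corollary is essentially a change of variable from the slant-range law already established. The only point requiring care is verifying that the altitude offset cancels cleanly in $l(\sqrt{z^2+h^2})=z$, so that the inner integral's limit becomes $z$ rather than $\sqrt{z^2+h^2}$, and noting that the strict monotonicity of the slant-to-horizontal map legitimizes passing directly between the two CDFs before differentiating.
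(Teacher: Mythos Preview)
Your proposal is correct and follows essentially the same approach as the paper: relate $F_{Z_N}$ to $F_{R_N}$ via the monotone map $R_N=\sqrt{Z_N^2+h^2}$, invoke the CDF established in the proof of Lemma~\ref{lma:NearstLoSNLoSDistances} so that $l(\sqrt{z^2+h^2})=z$ collapses the integral limit, and differentiate; the $Z_L$ case is handled symmetrically.
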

where (a) is due to $Z_N=\sqrt{R_N^2-h^2}$, and (b) follows from (\ref{Eq:proofNearestNLoSDis}). Finally, we complete the proof by taking the derivative of $F_{Z_N}(z)$ with respect to $z$. Following the same steps, we arrive at the final result for $f_{Z_L}(z)$.

The following remarks give clear insight on the range over which the interfering UAV-BSs are located which will be useful when we present the main results of this paper.
\begin{remarkl}\label{rmrk:ClostNLoSIntrfrer}
Given that the typical user is associated with a NLoS UAV-BS located at a distance $r$ from the typical user, the closest interfering LoS UAV-BS is at least at a distance
\begin{equation}
d_L=\left(\frac{\eta_L}{\eta_N}\right)^{\frac{1}{\alpha_L}}r^{\frac{\alpha_N}{\alpha_L}}. 
\end{equation}
\end{remarkl}
\begin{remarkl}\label{rmrk:ClostLoSIntrfrer}
Given that the typical user is associated with a LoS UAV-BS located at a distance $r$ from the typical user, the closest interfering NLoS UAV-BS is at least at a distance %$(\frac{\eta_L}{\eta_N})^{\frac{1}{\alpha_L}}r^{\frac{\alpha_N}{\alpha_L}}$.  
\begin{equation}
d_N=
\begin{cases}
h, \hspace{65pt} \textup {if} \quad h \leq r \leq \left(\frac{\eta_L}{\eta_N}\right)^{\frac{1}{\alpha_L}}h^{\frac{\alpha_N}{\alpha_L}}\\
\left(\frac{\eta_N}{\eta_L}\right)^{\frac{1}{\alpha_N}}r^{\frac{\alpha_L}{\alpha_N}},\hspace{11pt} \textup {if} \quad r> \left(\frac{\eta_L}{\eta_N}\right)^{\frac{1}{\alpha_L}}h^{\frac{\alpha_N}{\alpha_L}}. 
\end{cases}
\end{equation}
\end{remarkl}
As per the association rule in section \ref{section:AssociationRule}, the typical user is associated with a single UAV-BS which could be a LoS or a NLoS UAV-BS. The following lemma gives the probabilities that the typical user is either associated with a LoS UAV-BS or a NLoS UAV-BS.
\begin{lma}\label{lma:LoSandNLoSAssociation}
The probability that the typical user is associated with a LoS UAV-BS is given by
\begin{align}
A_L&=1-2\pi\lambda \int_{0}^{\infty}z P_N(z)\exp\left(-2\pi\lambda\int_{0}^{\sqrt{U(z)}}tP_L(t)dt\right)\nonumber\\
&\hspace{0pt}\times \exp\left( -2\pi\lambda \int_{0}^{z}t P_N(t)dt\right)dz,
\end{align}
where $U(z)=\left(\frac{\eta_L}{\eta_N}\right)^{\frac{2}{\alpha_L}}\left(z^2+h^2\right)^{\frac{\alpha_N}{\alpha_L}}-h^2$. The probability that the typical user is associated with a NLoS UAV-BS is ${A_N=1-A_L}$. 
\end{lma}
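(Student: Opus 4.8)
The plan is to compute the complementary probability $A_N = 1 - A_L$ directly from the strongest-average-\textsf{SINR} association rule and then subtract from one. By definition the typical user associates with a NLoS UAV-BS exactly when $\eta_N R_N^{-\alpha_N} > \eta_L R_L^{-\alpha_L}$, so
\[
A_N = \Prb\!\left(\eta_N R_N^{-\alpha_N} > \eta_L R_L^{-\alpha_L}\right).
\]
First I would rearrange this inequality into a constraint on $R_L$ given $R_N$, obtaining $R_L > (\eta_L/\eta_N)^{1/\alpha_L} R_N^{\alpha_N/\alpha_L}$, which is precisely the threshold distance $d_L$ of Remark~\ref{rmrk:ClostNLoSIntrfrer}: NLoS wins iff every LoS UAV-BS lies beyond $d_L$.

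Second, I would pass to the horizontal distances $Z_N,Z_L$ via $R_N=\sqrt{Z_N^2+h^2}$ and $R_L=\sqrt{Z_L^2+h^2}$. Substituting and isolating $Z_L$ turns the association event into $Z_L > \sqrt{U(Z_N)}$ with exactly the $U(z)$ appearing in the statement; indeed $\sqrt{U(z)}$ is the horizontal projection $\sqrt{d_L^2-h^2}$ of the threshold $d_L$. This change of variable is what makes the integrals align with the horizontal-distance laws of Corollary~\ref{corl:HorizontalDistance}, whose exponents start from $0$ rather than from $h$ and are therefore cleaner to manipulate.

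Third, since $\Phi^N$ and $\Phi^L$ are independent inhomogeneous PPPs, $Z_N$ and $Z_L$ are independent. I would then condition on $Z_N=z$ and write
\[
A_N = \int_0^\infty \Prb\!\left(Z_L > \sqrt{U(z)}\right) f_{Z_N}(z)\,dz,
\]
inserting the complementary CDF $\Prb(Z_L > w) = \exp\!\big(-2\pi\lambda\int_0^{w} t P_L(t)\,dt\big)$ read off from (\ref{Eq:DistrbtionOfZ_L}) together with $f_{Z_N}$ from (\ref{Eq:DistrbtionOfZ_N}). This reproduces the stated double-exponential integrand, and $A_L = 1 - A_N$ yields the claim.

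The main obstacle is the boundary case $U(z)<0$, i.e. when the serving NLoS UAV-BS is so close that even a LoS UAV-BS at the minimum feasible distance $h$ (horizontal distance $0$) cannot overtake it; equivalently $d_L<h$ in Remark~\ref{rmrk:ClostNLoSIntrfrer}. There $\sqrt{U(z)}$ is not real and the event $\{Z_L>\sqrt{U(z)}\}$ must be read as the sure event $\{Z_L^2>U(z)\}$, of probability $1$. To keep the single-integral form valid one interprets the lower limit as $\max(\sqrt{U(z)},0)$, so the inner integral vanishes and the exponential equals $1$ on that region; I would check that under the model assumptions $\eta_L>\eta_N$ and $\alpha_L<\alpha_N$ this convention is consistent with the two-regime threshold recorded in Remark~\ref{rmrk:ClostLoSIntrfrer}. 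Once this case is settled, the remaining steps are routine substitutions.
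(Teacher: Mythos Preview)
Your proposal is correct and follows essentially the same route as the paper: the paper writes $A_L=\Prb(\zeta_L R_L^{-\alpha_L}>\zeta_N R_N^{-\alpha_N})$, passes to horizontal distances to obtain the event $\{Z_L^2<U(Z_N)\}$, conditions on $Z_N=z$, and evaluates $1-\Prb(Z_L>\sqrt{U(z)})$ via the PPP void probability before substituting $f_{Z_N}$; your framing via $A_N$ first is the same computation read from the other side. Your explicit treatment of the boundary case $U(z)<0$ is in fact more careful than the paper's proof, which tacitly assumes $U(z)\geq 0$ throughout.
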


\begin{proof}
See Appendix \ref{proof:LoSandNLoSAssociation}.
\end{proof}

% \subsection{Main Result for Coverage Probability}
% \textcolor{red}{In this section, we present our main results for the coverage probability. We begin by deriving the Laplace transform of interference}
% The following Lemma gives the Laplace transform of the aggregated interference power conditioned on the serving UAV-BS being at a distance $r$ from the typical user.
\section{Coverage probability}
The coverage probability is generally defined as the probability that the \textsf{SINR} is greater than a designated threshold $T$:
\begin{equation}
P_C=\Prb(\textsf{SINR}> T).
\end{equation}

We begin this section by deriving the Laplace transform of the interference, which is given in Lemma \ref{lma:LaplaceTransform}.
\begin{lma}\label{lma:LaplaceTransform}
The Laplace transform of the aggregated interference power conditioned on the serving UAV-BS being at a distance $r$ from the typical user is given by
\begin{align}\label{Eq:LaplaceTransform}
&\mathcal{L}_I(s|r)=\exp\Bigg( -2\pi\lambda \int_{v_1(r)}^{\infty} \left[ 1-\frac{1}{1+s \zeta_N (t^2+h^2)^{\frac{-\alpha_N}{2}}}\right]\nonumber\\
&\times t P_N(t)dt -2\pi\lambda \int_{v_2(r)}^{\infty}\left[ 1-\left(\frac{m}{m+s \zeta_L (t^2+h^2)^{\frac{-\alpha_L}{2}}}\right)^m\right]\nonumber\\
&\times t P_L(t)dt \Bigg),
\end{align}
with
\begin{align}
v_1(r)&=\sqrt{r^2-h^2}, \hspace{5pt}v_2(r)= \sqrt{d_L^2-h^2} \hspace{6pt} \textup {if} \quad x_o \in \Phi^N\nonumber \\
v_1(r)&=\sqrt{d_N^2-h^2},\hspace{2pt} v_2(r)= \sqrt{r^2-h^2}  \hspace{7pt} \textup {if} \quad x_o \in \Phi^L\nonumber.
\end{align}
\end{lma}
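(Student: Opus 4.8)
The plan is to exploit the independence of the two inhomogeneous point processes $\Phi^N$ and $\Phi^L$ to factor the Laplace transform $\mathcal{L}_I(s|r)=\E[e^{-sI}]$ into a NLoS factor and a LoS factor. Writing $I=I_N+I_L$, where $I_N$ and $I_L$ collect the NLoS and LoS interference sums in (\ref{Eq:Interference}), independence gives $\mathcal{L}_I(s|r)=\E[e^{-sI_N}]\E[e^{-sI_L}]$, so each factor can be handled separately.

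For the NLoS factor I would first condition on the point configuration and average over the fading. Since the $H_{x_i}$ are i.i.d.\ $\exp(1)$ and independent of the point locations, the fading expectation can be pushed inside the product over interferers using $\E[e^{-aH}]=(1+a)^{-1}$, with $a=s\zeta_N(z^2+h^2)^{-\alpha_N/2}$ for an interferer at horizontal distance $z$ (recall the slant distance is $\sqrt{z^2+h^2}$). The remaining expectation over the PPP is then evaluated with the probability generating functional of an inhomogeneous PPP, whose projection onto the horizontal plane has intensity $\lambda P_N(z)$. This converts the product into $\exp\bigl(-\lambda\int(1-(1+a)^{-1})P_N(z)\,dx\bigr)$, and passing to polar coordinates ($dx=2\pi z\,dz$) yields the first integral in (\ref{Eq:LaplaceTransform}). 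The LoS factor follows the same template, the only change being that the gains $G_{x_i}$ are Gamma$(m,m)$; using $\E[e^{-aG}]=(m/(m+a))^m$ with $a=s\zeta_L(z^2+h^2)^{-\alpha_L/2}$ and the PGFL with intensity $\lambda P_L(z)$ gives the second integral.

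The step I expect to be delicate is pinning down the lower limits $v_1(r)$ and $v_2(r)$, i.e.\ the radii of the interferer-free exclusion regions, since these encode the association rule and distinguish the cases $x_o\in\Phi^N$ and $x_o\in\Phi^L$. Here I would invoke Remarks \ref{rmrk:ClostNLoSIntrfrer} and \ref{rmrk:ClostLoSIntrfrer}. When the serving UAV-BS is NLoS at distance $r$, every NLoS interferer must lie beyond $r$ (otherwise it would itself be the server, and the single excluded point has measure zero), giving $v_1(r)=\sqrt{r^2-h^2}$, while every LoS interferer must lie beyond $d_L$ by Remark \ref{rmrk:ClostNLoSIntrfrer}, giving $v_2(r)=\sqrt{d_L^2-h^2}$. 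The roles swap when the serving UAV-BS is LoS, with $v_2(r)=\sqrt{r^2-h^2}$ and $v_1(r)=\sqrt{d_N^2-h^2}$ taken from Remark \ref{rmrk:ClostLoSIntrfrer}. Restricting each PGFL integral to horizontal distances exceeding the corresponding exclusion radius then produces the stated limits and completes the derivation.
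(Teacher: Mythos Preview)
Your proposal is correct and follows essentially the same route as the paper: factor the Laplace transform over the independent thinned processes $\Phi^N$ and $\Phi^L$, take the fading expectation (exponential MGF for NLoS, Gamma MGF for LoS), apply the PGFL of the inhomogeneous PPP in polar coordinates, and then read off the lower limits $v_1(r),v_2(r)$ from Remarks~\ref{rmrk:ClostNLoSIntrfrer} and~\ref{rmrk:ClostLoSIntrfrer}. Your write-up is in fact a bit more explicit than the paper's (you spell out the intensity $\lambda P_N(z)$, the polar change of variables, and the Slivnyak-type observation that removing the serving point is immaterial), but the underlying argument is the same.
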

\begin{proof}
See Appendix \ref{Proof:LaplaceTransform}.
\end{proof}

%  \textcolor{red}{Now that we have developed expressions for association probabilities and the Laplace transform of the interference, we proceed to present the main theorem on the coverage probability for the typical user under LoS and NLoS scenarios.}
Now that we have developed expressions for association probabilities and the Laplace transform of the interference, we present the main theorem on the coverage probability.
\begin{thm}\label{thm:coverage}
The probability of coverage $P_C$ is given by
\begin{equation}
P_C=P_{C,L}A_L+P_{C,N}A_N,
\end{equation}
where $P_{C,L}$ and $P_{C,N}$ are the conditional coverage probabilities given that the typical user is associated with a LoS UAV-BS or a NLoS UAV-BS, respectively, and are given by 

{\small\begin{align}\label{Eq:CondCovgeLoS}
&P_{C,L}=\sum\limits_{k=1}^{m} \binom{m}{k}(-1)^{k+1} \int_{h}^{\infty} \exp\Bigg(-k\mu_L\sigma^2 r^{\alpha_L}-2\pi\lambda\nonumber\\
&\int_{l(d_N)}^{\infty}\left[1-\frac{1}{1+k\mu_L r^{\alpha_L}\zeta_N (t^2+h^2)^{\frac{-\alpha_N}{2}}}\right]
t P_N(t)dt\\
&-2\pi\lambda\int_{l(r)}^{\infty}\left[1-\left(\frac{m}{m+k\mu_L r^{\alpha_L} \zeta_L (t^2+h^2)^{\frac{-\alpha_L}{2}}}\right)^m\right]\nonumber\\
&t P_L(t)dt  \Bigg) f_{R_L}(r)dr,\nonumber
\end{align}
}%
and
{\small\begin{align}\label{Eq:CondCovgeNLoS}
&P_{C,N}=\int_{h}^{\infty}\exp\Bigg(-\sigma^2 T \zeta_N^{-1} r^{\alpha_N}-2\pi\lambda\int_{l(r)}^{\infty} \Bigg[1-\nonumber\\
&\frac{1}{1+ T r^{\alpha_N}(t^2+h^2)^{\frac{-\alpha_N}{2}}}\Bigg]tP_N(t)dt-2\pi\lambda\int_{l(d_L)}^{\infty}\Bigg[1-\\
&\Big(\frac{m}{m+\eta_N^{-1}\eta_L T r^{\alpha_N}(t^2+h^2)^{\frac{-\alpha_L}{2}}}\Big)^m\Bigg]t P_L(t)dt\Bigg) f_{R_N}(r)dr,\nonumber
\end{align}
}%
where $\mu_L=\alpha m T \zeta_L^{-1}$, $l(d_N)=\sqrt{d_N^2-h^2}$, $l(r)=\sqrt{r^2-h^2}$, and $l(d_L)=\sqrt{d_L^2-h^2}$.
\end{thm}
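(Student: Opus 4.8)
The plan is to split the coverage event according to the type of the serving UAV-BS and invoke the law of total probability. Since the typical user is associated either with a LoS or a NLoS UAV-BS, I would write $P_C = \Prb(\textsf{SINR} > T \,|\, x_o \in \Phi^L)\,\Prb(x_o \in \Phi^L) + \Prb(\textsf{SINR} > T \,|\, x_o \in \Phi^N)\,\Prb(x_o \in \Phi^N)$, which is exactly $P_{C,L}A_L + P_{C,N}A_N$ with $A_L$ and $A_N$ supplied by Lemma~\ref{lma:LoSandNLoSAssociation}. All the remaining effort goes into evaluating the two conditional probabilities.

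For the NLoS branch the computation is exact. Starting from the NLoS case of~\eqref{Eq:SINR}, the coverage event rearranges to $H_{x_o} > T\zeta_N^{-1} R_N^{\alpha_N}(\sigma^2 + I)$, and since $H_{x_o}\sim\exp(1)$ its complementary CDF is a single exponential. Conditioning on $R_N = r$ and on the interference realization gives $\exp(-T\zeta_N^{-1} r^{\alpha_N}\sigma^2)\exp(-T\zeta_N^{-1} r^{\alpha_N} I)$, so taking the expectation over $I$ turns the second factor into the Laplace transform of Lemma~\ref{lma:LaplaceTransform} evaluated at $s = T\zeta_N^{-1} r^{\alpha_N}$. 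For $x_o\in\Phi^N$ the lower limits specialize to $v_1(r)=l(r)$ and $v_2(r)=l(d_L)$, and using $\zeta_N^{-1}\zeta_L = \eta_N^{-1}\eta_L$ one checks that $s\zeta_N = T r^{\alpha_N}$ and $s\zeta_L = \eta_N^{-1}\eta_L T r^{\alpha_N}$ reproduce precisely the two integrands of~\eqref{Eq:CondCovgeNLoS}. Averaging over $R_N$ with the density $f_{R_N}(r)$ from Lemma~\ref{lma:NearstLoSNLoSDistances} then completes this branch.

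The LoS branch follows the same recipe but with a crucial complication: the serving gain $G_{x_o}$ is Gamma$(m,m)$ rather than exponential, so its complementary CDF does not collapse to a single exponential and the $I$-expectation cannot be read off as one Laplace transform. This is the main obstacle. The plan is to resolve it with the standard tight approximation of the normalized incomplete Gamma tail, $\Prb(G_{x_o} > g) \approx 1 - \bigl(1 - e^{-\beta g}\bigr)^{m}$ with $\beta = m\,(m!)^{-1/m}$, and then expand the $m$th power by the binomial theorem to obtain the alternating sum $\sum_{k=1}^{m}\binom{m}{k}(-1)^{k+1} e^{-k\beta g}$. This is exactly the device that lets us avoid the $m$ numerical derivatives of the interference Laplace transform used in~\cite{galkin2017coverage,chetlur2017downlink}, and it is the reason the statement is phrased as an approximation. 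Applying it with $g = T\zeta_L^{-1} r^{\alpha_L}(\sigma^2 + I)$ and identifying $\alpha = (m!)^{-1/m}$ so that $k\beta\,T\zeta_L^{-1} = k\mu_L$, each summand factors into a noise term $\exp(-k\mu_L r^{\alpha_L}\sigma^2)$ and an interference term whose expectation over $I$ is $\mathcal{L}_I(k\mu_L r^{\alpha_L}\,|\,r)$. For $x_o\in\Phi^L$ the limits are $v_1(r)=l(d_N)$ and $v_2(r)=l(r)$, and substituting $s = k\mu_L r^{\alpha_L}$ into Lemma~\ref{lma:LaplaceTransform} yields the two integrands of~\eqref{Eq:CondCovgeLoS}. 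Averaging over $R_L$ with $f_{R_L}(r)$ and summing over $k$ finishes the derivation.

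I expect the genuinely delicate point to be only the Gamma-tail step: the exponential and binomial manipulations, the substitutions into Lemma~\ref{lma:LaplaceTransform}, and the distance averaging are all mechanical once Lemmas~\ref{lma:NearstLoSNLoSDistances}--\ref{lma:LaplaceTransform} are in hand, whereas it is the accuracy of the approximation $1-\bigl(1-e^{-\beta g}\bigr)^{m}$ that ultimately controls whether $P_{C,L}$, and hence $P_C$, tracks the simulations with negligible error as claimed.
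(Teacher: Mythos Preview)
Your proposal is correct and follows essentially the same route as the paper: split by association type via total probability, handle the NLoS branch exactly through the exponential tail of $H_{x_o}$, and for the LoS branch replace the Gamma tail by the Alzer-type bound $(1-e^{-\alpha m g})^m$ with $\alpha=(m!)^{-1/m}$, expand binomially, and read each summand as a Laplace transform from Lemma~\ref{lma:LaplaceTransform} with the appropriate limits. The only cosmetic difference is that the paper presents the LoS case first and then obtains the NLoS case by formally setting $m=1$, whereas you treat NLoS directly as the exact Rayleigh computation; the content is the same.
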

\begin{proof}
See Appendix \ref{prf:CoverageProbability}. 
\end{proof}

\section{average achievable rate}
%In this section, we study the 
The average achievable rate of the typical user is given by ${\tau=\E [\ln(1+\textsf{SINR})]}$ (nats/Hz), where 1 bit = ln(2) = 0.693 nats \cite{andrews2011tractable}. The following theorem presents the main rate theorem.
\begin{thm}\label{thm:Rate}
The average downlink rate of a typical user is given by
\begin{equation}
\tau=\tau_L A_L+\tau_N A_N,
\end{equation}
where $\tau_L$ and $\tau_N$ are the average achievable rates given that the typical user is associated with a LoS or a NLoS UAV-BS, respectively, and are given by 
{\small\begin{equation}\label{Eq:CondRtaeLoS}
\begin{aligned}
&\tau_L=\sum\limits_{k=1}^{m} \binom{m}{k}(-1)^{k+1} \int_{r\geq h} \exp\left(k \rho_L \sigma^2 r^{\alpha_L}\right) \\
&\int_{y >0}  \exp\Bigg(-k \rho_L \sigma^2 r^{\alpha_L} e^y -2\pi\lambda \int_{l(d_N)}^{\infty}\\
&\left[ 1-\frac{1}{1+k \rho_L r^{\alpha_L}(e^y-1) \zeta_N (t^2+h^2)^{\frac{-\alpha_N}{2}}}\right] t P_N(t)dt \\
 &-2\pi\lambda \int_{l(r)}^{\infty}\left[ 1-\left(\frac{m}{m+k \alpha m r^{\alpha_L}(e^y-1) (t^2+h^2)^{\frac{-\alpha_L}{2}}}\right)^m\right]\\
 & t P_L(t)dt \Bigg)  f_{R_L}(r) dr,
 \end{aligned}
\end{equation}}%
and
{\small\begin{equation}\label{Eq:CondRtaeNLoS}
\begin{aligned}
&\tau_N=\int\limits_{r\geq h} \exp\left(\sigma^2 \zeta_N^{-1} r^{\alpha_N} \right)\ \int\limits_{y > 0}\exp\Bigg(-\sigma^2 \zeta_N^{-1} r^{\alpha_N} e^y\\
& -2\pi\lambda \int\limits_{l(r)}^{\infty} \left[ 1-\frac{1}{1+r^{\alpha_N} (e^y-1)(t^2+h^2)^{\frac{-\alpha_N}{2}}}\right] t P_N(t)dt\\
 &-2\pi\lambda \int\limits_{l(d_L)}^{\infty}\left[ 1-\left(\frac{m}{m+\zeta_N^{-1}r^{\alpha_N} (e^y-1) \zeta_L (t^2+h^2)^{\frac{-\alpha_L}{2}}}\right)^m\right]\\
&\times t P_L(t)dt \Bigg),
\end{aligned}
\end{equation}}%
where $\rho_L=\alpha m \zeta_L^{-1}$, $l(d_N)=\sqrt{d_N^2-h^2}$, $l(r)=\sqrt{r^2-h^2}$, and $l(d_L)=\sqrt{d_L^2-h^2}$.
\end{thm}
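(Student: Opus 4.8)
The plan is to mirror the structure of Theorem~\ref{thm:coverage}, exploiting the fact that the average rate is an integral of tail probabilities that are themselves coverage probabilities evaluated at a varying threshold. First I would split the rate by the association event, writing $\tau=\E[\ln(1+\textsf{SINR})\mathbf{1}(x_o\in\Phi^L)]+\E[\ln(1+\textsf{SINR})\mathbf{1}(x_o\in\Phi^N)]$ and defining $\tau_L$ and $\tau_N$ as the corresponding conditional expectations, so that $\tau=\tau_L A_L+\tau_N A_N$ follows immediately from the association probabilities of Lemma~\ref{lma:LoSandNLoSAssociation}.

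Next, for a nonnegative random variable I would invoke the identity $\E[\ln(1+X)]=\int_0^\infty \Prb(X>e^y-1)\,dy$, which converts each conditional rate into an integral over $y$ of a conditional tail probability: $\tau_L=\int_0^\infty \Prb(\textsf{SINR}>e^y-1\mid x_o\in\Phi^L)\,dy$, and analogously for $\tau_N$. The crucial observation is that $\Prb(\textsf{SINR}>T\mid x_o\in\Phi^L)$ and $\Prb(\textsf{SINR}>T\mid x_o\in\Phi^N)$ are exactly the conditional coverage probabilities $P_{C,L}$ and $P_{C,N}$ already established in Theorem~\ref{thm:coverage}. Hence $\tau_L$ and $\tau_N$ are obtained by replacing the fixed threshold $T$ with $e^y-1$ in \eqref{Eq:CondCovgeLoS} and \eqref{Eq:CondCovgeNLoS} and integrating over $y\in(0,\infty)$; in particular the Nakagami-$m$ binomial expansion $\sum_{k=1}^m\binom{m}{k}(-1)^{k+1}$ and the Laplace transform of Lemma~\ref{lma:LaplaceTransform} are inherited unchanged.

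Carrying out the substitution, I would track how $T\mapsto e^y-1$ propagates through the parameters: $\mu_L=\alpha m T\zeta_L^{-1}$ becomes $\rho_L(e^y-1)$ with $\rho_L=\alpha m\zeta_L^{-1}$, and the Laplace-transform arguments $k\mu_L r^{\alpha_L}\zeta_N$, $k\mu_L r^{\alpha_L}\zeta_L$, and $T\zeta_N^{-1}r^{\alpha_N}$ each acquire the factor $(e^y-1)$ in place of $T$, reproducing the interference integrals in \eqref{Eq:CondRtaeLoS} and \eqref{Eq:CondRtaeNLoS}. The noise term requires only a small rearrangement: writing $\exp(-k\rho_L\sigma^2 r^{\alpha_L}(e^y-1))=\exp(k\rho_L\sigma^2 r^{\alpha_L})\exp(-k\rho_L\sigma^2 r^{\alpha_L}e^y)$ lets me pull the $y$-independent factor $\exp(k\rho_L\sigma^2 r^{\alpha_L})$ outside the inner $y$-integral while keeping it inside the $k$-sum and the $r$-integral, which is precisely the placement shown in the statement; the NLoS case is identical with $\zeta_N^{-1}r^{\alpha_N}$ in place of $\rho_L r^{\alpha_L}$.

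The main obstacle I anticipate is bookkeeping rather than analysis. I must justify interchanging the $y$-integration with the expectation over the interference field and the closest-distance distribution, which is permitted by Tonelli's theorem since every integrand is nonnegative. I then need to confirm that the association-dependent lower limits $l(d_N)$, $l(d_L)$, and $l(r)$ inherited from Remarks~\ref{rmrk:ClostNLoSIntrfrer}--\ref{rmrk:ClostLoSIntrfrer} and Lemma~\ref{lma:LaplaceTransform} remain valid as the threshold varies with $y$; this holds because those limits encode only the association geometry and are independent of $T$. Once the interchange is justified and the parameter substitution verified termwise, the expressions \eqref{Eq:CondRtaeLoS} and \eqref{Eq:CondRtaeNLoS} follow directly, completing the proof.
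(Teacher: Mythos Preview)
Your proposal is correct and follows essentially the same route as the paper: write $\tau$ via the tail-integral identity $\E[\ln(1+\textsf{SINR})]=\int_{0}^{\infty}\Prb(\textsf{SINR}>e^{y}-1)\,dy$, split on the association event, and then reproduce the Gamma-bound/binomial/Laplace-transform computation of Theorem~\ref{thm:coverage} with $T$ replaced by $e^{y}-1$, finally swapping the $r$- and $y$-integrals to extract the factor $\exp(k\rho_L\sigma^2 r^{\alpha_L})$. The only cosmetic difference is that the paper re-derives the Gamma approximation and binomial steps in place rather than invoking Theorem~\ref{thm:coverage} directly, but the substance is identical.
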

\begin{proof}
See Appendix \ref{proof:AverageRate}.
\end{proof}

\section{numerical results}
In this section, we provide simulations to evaluate our main analytical results. In particular, we use MATLAB to simulate Theorem \ref{thm:coverage} and Theorem \ref{thm:Rate}. We consider
 %\textcolor{red}{In this Section, we use MATLAB to simulate Theorem \ref{thm:coverage} and Theorem \ref{thm:Rate}.} We consider %10 km $\times$ 10 km 
 dense urban area with parameters $ a= 12.08, b = 0.11, \eta_L= 0.69$ and $\eta_N = 0.005$ \cite{alzenad20173,Hourani}. We also consider UAV-BSs that transmit their signals at $f_c=2$ GHz and $P_t=30$ dBm while the noise power is assumed $-174$ dBm/Hz. The NLoS and LoS path loss exponents are $\alpha_N=3.5$ and $\alpha_L=2$. The shape parameter of the Nakagami-m fading is $m=3$ and the system bandwidth is 10 MHz.

The impact of the UAV-BSs altitude on the coverage probability of a typical user is studied in Fig. \ref{fig:CovgeSINR}. It can be seen from Fig. \ref{fig:CovgeSINR} that as the UAV-BSs altitude increases, the coverage probability decreases due to the increase in path loss. It can also be observed from Fig. \ref{fig:CovgeSINR} that the analytical results in Theorem \ref{thm:coverage} match the simulations with negligible errors. 
\begin{figure}
\begin{center}
\includegraphics[ height=6.5cm, width=9cm]{./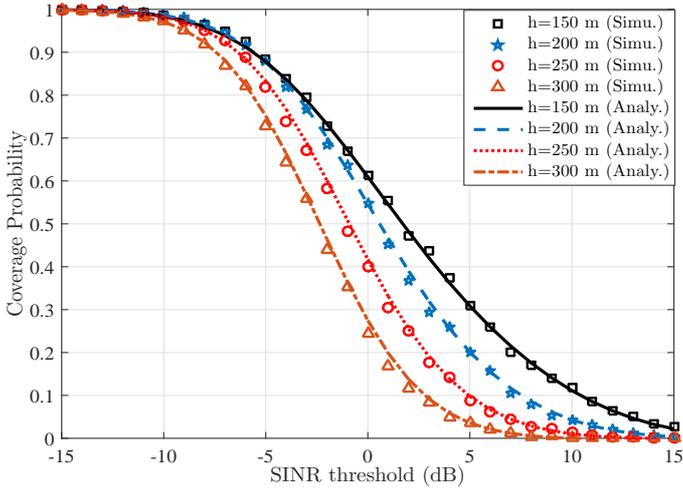}
\caption{\footnotesize Coverage probability versus \textsf{SINR} threshold for the typical user for different altitudes.}
\label{fig:CovgeSINR}
\end{center}
\end{figure}

The impact of the UAV-BSs altitude and their densities on the data rate achieved by a typical user are studied in Fig.~\ref{fig:rate}, where we plot the average rate versus UAV-BSs altitude for the UAV-BSs densities $\lambda=3,5,7$ and $9$ BSs/$\textup {km}^2$. The results in Fig. \ref{fig:rate} show that for a given density, the average achievable rate degrades as the UAV-BSs altitude increases due to the increase in the path loss. Furthermore, for a given UAV-BS altitude, the average achievable rate decreases as the UAV-BSs density increases. This is because the interfering UAV-BSs become closer to the typical user as the density increases which degrades the \textsf{SINR} at the typical user.

 %%%%%%%%%%%%%%%%%%%%%%%%%%%%%%%%%%%%%%%%%%%%%%%%%%%%%%%%%%%%%%%%%%%%%%%
\section{conclusion}
 In this paper, we proposed a stochastic geometry framework to analyze coverage and rate in a network of UAV-BSs deployed at a particular height. The framework accommodates both LoS and NLoS transmissions, and considers Rayleigh fading and Nakagami-m fading for NLoS and LoS links, respectively. 
 We derived analytical expressions for the conditional Laplace transform of the interference power, the association probabilities, and the distribution of the distances between the typical user and the closest NLoS and LoS UAV-BSs. Approximate expressions for the coverage probability and average achievable rate were also derived. Interestingly, we showed that these approximations match the simulations with negligible errors.  

\begin{appendices}

\section{proof of lemma \ref{lma:NearstLoSNLoSDistances}}\label{proof:NearstLoSNLoSDistances}
Given that $R_N$ is a random variable, %which denotes the distance between the typical user and the closest NLoS UAV-BS, 
the corresponding horizontal Euclidean distance%between the serving LoS UAV-BS and the point $\tilde {o}$
, denoted by $Z_N$, is also a random variable given by $Z_N=\sqrt{R_N^2-h^2}$. The cumulative distribution function (CDF) of $R_N$ is given by 
\begin{align}\label{Eq:proofNearestNLoSDis}
F_{R_N}(r)&=1-\Prb(R_N> r)=1-\Prb\left(Z_N > \sqrt{r^2-h^2}\right)\nonumber\\
%&\overset{a}{=}\Prb \{\text {No LoS UAV-BS closer than} \hspace{3pt} \sqrt{r^2-h^2} \}\nonumber\\
&\overset{a}{=}1-\exp\left(-2\pi\lambda\int_{0}^{\sqrt{r^2-h^2}}z P_N(z)dz\right),
\end{align}
where (a) follows from the null probability of the PPP \cite{andrews2011tractable}. Finally, $f_{R_N}(r)=\dfrac{d}{dr}F_{R_N}(r)$ which completes the proof of $f_{R_N}(r)$. By following the same steps as for $f_{R_N}(r)$, we can complete the proof of $f_{R_L}(r)$.
%%%%%%%%%%%%%%%%%%%%%%%%%%%%%%%%%%%%%%%%%%%%%%%%
\begin{figure}
\begin{center}
\includegraphics[ height=6.5cm, width=9cm]{./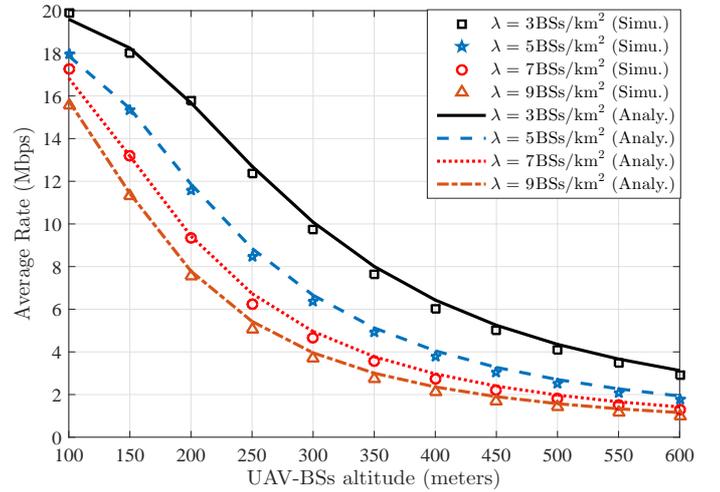}
\caption{\footnotesize Average rate versus UAV-BS altitude for the typical user for different UAV-BSs densities (BW=10 MHz).}
\label{fig:rate}
\end{center}
\end{figure}
\section{proof of Lemma  \ref{lma:LoSandNLoSAssociation}}\label{proof:LoSandNLoSAssociation}
%Given the fact that the strongest average \textsf{SINR} and the strongest average received power association schemes are equivalent (i.e., the UAV-BS that provides the strongest average \textsf{SINR} also provides the strongest average received power \cite{yang2017density}), the probability that the typical user is associated with a LoS UAV-BS is then given by
Since the UAV-BS that provides the strongest average \textsf{SINR} also provides the strongest average received power \cite{yang2017density}, the probability that the typical user is associated with a LoS UAV-BS is then given by
\begin{align}
A_L&=\Prb \left(\zeta_L R_L^{-\alpha_L}>\zeta_N  R_N^{-\alpha_N}\right)\nonumber\\
&\overset{(a)}{=}\Prb\left(Z_L^2<(\frac{\eta_L}{\eta_N})^{\frac{2}{\alpha_L}}(Z_N^2+h^2)^{\frac{\alpha_N}{\alpha_L}}-h^2\right)\nonumber\\
&\overset{(b)}{=}\int_{0}^{\infty}\left(1-\Prb \left(Z_L^2>U(z)\right)\right)f_{Z_N}(z)dz\nonumber\\
&\overset{(c)}{=}1-\int_{0}^{\infty}\Prb \left(Z_L>\sqrt{U(z)}\right)f_{Z_N}(z)dz\nonumber\\
&\overset{(d)}{=}1-\int_{0}^{\infty} \exp \left(-2\pi\lambda\int_{0}^{\sqrt{U(z)}}tP_L(t)dt\right) f_{Z_N}(z)dz, 
\end{align}
where (a) is due to $R_L=\sqrt{Z_L^2+h^2}$ and ${R_N=\sqrt{Z_N^2+h^2}}$, (b) follows from conditioning on $Z_N=z$, and ${U(z)=\left(\frac{\eta_L}{\eta_N}\right)^{\frac{2}{\alpha_L}}\left(z^2+h^2\right)^{\frac{\alpha_N}{\alpha_L}}-h^2}$, (c) follows from the fact that $Z_N$ is a positive random variable, and (d) follows from the null probability of the PPP. Finally, by substituting  (\ref{Eq:DistrbtionOfZ_N}) into (d), we complete the proof.
%%%%%%%%%%%%%%%%%%%%%%%%%%%%%%%%%%%%%%%%%
\section{proof of Lemma \ref{lma:LaplaceTransform}}\label{Proof:LaplaceTransform}
The Laplace transform $\mathcal{L}_I(s)$ can be written as follows:
\begin{align}
&\mathcal{L}_I(s)=\E_I\left[\exp\left(-sI\right)\right]\nonumber\\
%& \hspace{22pt}\overset{(a)}{=}\E_{I_N}\left[\exp\Big(-s\sum\limits_{x_i\in\phi^N\backslash x_o} \zeta_N H_{x_i} D_{N,x_i}^{-\alpha_N}\Big)\right] \nonumber\\
%& \hspace{37pt} \E_{I_L}\bigg[\exp\Big(-s\sum\limits_{x_i\in\phi^L\backslash x_o} \zeta_L G_{x_i} D_{L,x_i}^{-\alpha_L}\Big)\bigg]\\
& \hspace{22pt}\overset{(a)}{=}\E_{\phi^N}\bigg[\prod\limits_{x_i\in\phi^N\backslash x_o} \E_H \exp\Big(-s \zeta_N H_{x_i} D_{N,x_i}^{-\alpha_N}\Big)\bigg]  \nonumber\\
 & \hspace{30pt}\times \E_{\phi^L}\bigg[\prod\limits_{x_i\in\phi^L\backslash x_o} \E_G \exp\Big(-s \zeta_L G_{x_i} D_{L,x_i}^{-\alpha_L}\Big)\bigg]\nonumber\\
 & \hspace{22pt}\overset{}{=}\E_{\phi^N}\bigg[\prod\limits_{x_i\in\phi^N\backslash x_o}\frac{1}{1+s \zeta_N D_{N,x_i}^{-\alpha_N}}\bigg] \nonumber\\
 & \hspace{30pt}\times \E_{\phi^L}\bigg[\prod\limits_{x_i\in\phi^L\backslash x_o} \bigg(\frac{m}{m+s \zeta_L D_{L,x_i}^{-\alpha_L}}\bigg)^m\bigg],
 \end{align}
where (a) follows from (\ref{Eq:Interference}), the i.i.d distribution and the independence of the spatial point process and small scale fading.

Now given that the typical user is associated with a NLoS UAV-BS (i.e., $x_o\in \Phi^N$) located at a distance $R_N=r$, and from Remark (\ref{rmrk:ClostNLoSIntrfrer}) and the probability generating functional (PGFL) of the PPP, we obtain the final result in (\ref{Eq:LaplaceTransform}) for the case $x_o\in \Phi^N$. Similarly, the conditional Laplace transform of the aggregated interference power when the typical user is associated with a LoS UAV-BS can be obtained from remark (\ref{rmrk:ClostLoSIntrfrer}) and the PGFL of the PPP.

%%%%%%%%%%%%%%%%%%%%%%%%%%%%%%%%%%%%%%%%%%%%%%%%
\section{proof of theorem \ref{thm:coverage}}\label{prf:CoverageProbability}
%Given that the typical user is associated with a NLoS UAV-BS, the conditional coverage probability $P_{C,N}$ is given by
%\begin{equation}
%\begin{split}
%\begin{aligned}
%P_{C,N}&=\Prb\Big(\frac{\zeta_N H_{x_o} R_N^{-\alpha_N} }{\sigma^2+I}>T\Big)\\
%&\overset{(a)}{=}\int\limits_{h}^{\infty}\Prb \bigg( H_{x_o}>T \zeta_N ^{-1} r^{\alpha_N}\big(\sigma^2+I\big) \bigg) f_{R_N}(r) dr\nonumber\\
%&=\int\limits_{h}^{\infty} \E_I\bigg[ \Prb \bigg( H_{x_o}>T \zeta_N^{-1} r^{\alpha_N}(\sigma^2+I) \bigg)\bigg] f_{R_N}(r) dr\\
%&=\int\limits_{h}^{\infty}\exp \bigg(-\sigma^2 T\zeta_N^{-1}r^{\alpha_N}\bigg) \E_I \bigg[ \exp\Big(-T\zeta_N^{-1}r^{\alpha_N} I\Big) \bigg]\\
%& \hspace{20pt}\times f_{R_N}(r) dr\\
%&=\int\limits_{h}^{\infty}\exp\Big(-\sigma^2 T \zeta_N^{-1} r^{\alpha_N}\Big)\mathcal{L}_I\Big(T\zeta_N^{-1}  r^{\alpha_N}\Big) f_{R_N}(r) dr
%\end{aligned}
%\end{split}
%\end{equation}
%where (a) follows from conditioning on the serving NLoS UAV-BS being at a distance $R_N=r$, and $\mathcal{L}_I\Big(T\zeta_N^{-1}  r^{\alpha_N}\Big)$ %is obtained from (\ref{Eq:LaplaceTransform}) where $v_1(r)=\sqrt{r^2-h^2}$ and ${v_2(r)=\sqrt{d_L^2-h^2}}$ which completes the proof of %%(\ref{Eq:CondCovgeNLoS}).
Given that the typical user is associated with a LoS UAV-BS, the conditional coverage probability  $P_{C,L}$ is given by
{\small\begin{align}
P_{C,L}&=\Prb\left(\frac{\zeta_L G_{x_o} R_L^{-\alpha_L} }{\sigma^2+I}>T\right)\nonumber\\
&\overset{(a)}{=}\int_{h}^{\infty}\Prb \left( G_{x_o}>T \zeta_L^{-1}r^{\alpha_L}\left(\sigma^2+I\right) \right) f_{R_L}(r) dr\nonumber\\
&\overset{(b)}{=} 1-\int_{h}^{\infty} \E_I\left[ F_G\left(T \zeta_L^{-1}r^{\alpha_L}\left(\sigma^2+I\right) \right)\right] f_{R_L}(r) dr\nonumber\\
&\overset{(c)}{=} 1-\int_{h}^{\infty} \E_I\left[ \frac{\Gamma_l\left(m,mT \zeta_L^{-1}r^{\alpha_L}\left(\sigma^2+I\right) \right)}{\Gamma(m)}\right] f_{R_L}(r) dr,
\end{align}}%
where (a) follows from conditioning on the serving LoS UAV-BS being at a distance $R_L=r$ from the typical user, (b) follows from the definition $F_G(g) = \Prb(G\leq g)$ and taking the conditional expectation with respect to interference, and (c) follows from the definition of the CDF of Gamma distribution $F_G(g)=\frac{\Gamma_l (m,mg)}{\Gamma (m)}$ where $\Gamma_l (m,mg)=\int_{0}^{mg}t^{m-1} e^{-t}dt$ is the lower incomplete gamma function.

% Although the CDF of Gamma distribution can be evaluated exactly, we adopt an approximate evaluation due to the complexity of the exact evaluation. The CDF of Gamma distribution can be bounded as \cite{alzer1997some}
% \textcolor{red}{The evaluation of the CDF of Gamma distribution require evaluating higher order derivatives of the Laplace transform \cite{tabassum2018coverage}. The larger the shape parameter $m$ is, the higher the evaluation complexity is. Therefore, in the following, we provide an approximate evaluation of the coverage probability. In particular, we use a tight bound for the CDF of the Gamma distribution rather than using the exact evaluation. The CDF of Gamma distribution can be bounded as \cite{alzer1997some}}
The evaluation of the CDF of Gamma distribution requires evaluating higher order derivatives of the Laplace transform. The larger the shape parameter $m$ is, the higher the evaluation complexity is. Therefore, in the following, we provide an approximate evaluation of the coverage probability. In particular, we use a tight bound for the CDF of the Gamma distribution rather than using the exact evaluation. The CDF of Gamma distribution can be bounded as \cite{alzer1997some}
\begin{equation}\label{Eq:GammaBound}
\big(1-e^{-\beta mg}\big)^m<\frac{\Gamma_l\left(m,mg\right)}{\Gamma(m)}<\left(1-e^{-\alpha mg}\right)^m,
\end{equation}
where $m\neq 1$, and
\begin{equation}
\beta=
\begin{cases}
1, \hspace{27pt} \textup{if}\quad m>1\\
(m!)^{\frac{-1}{m}}, \hspace{2pt} \textup{if}\quad m<1\\
\end{cases}
\alpha=
\begin{cases}
(m!)^{\frac{-1}{m}}, \hspace{2pt} \textup{if}\quad m>1\\
1, \hspace{27pt} \textup{if}\quad m<1.
\end{cases}
\end{equation}

It has been shown in \cite{bai2015coverage} that the upper bound in (\ref{Eq:GammaBound}) provides a good approximation to the CDF of Gamma distribution. Therefore, we use the tighter upper bound. % to approximate the coverage probability. 
The conditional coverage probability can then be written as
{\small\begin{align}
P_{C,L}\approx &1-\int_{h}^{\infty} \E_I\left[ \left(1-\exp\left(-\mu_L r^{\alpha_L}\left(\sigma^2+I\right) \right)\right)^m\right]\nonumber\\
&\hspace{0pt} \times f_{R_L}(r) dr\nonumber\\
\overset{(a)}{=}& \int_{h}^{\infty} \E_I\left[ \sum\limits_{k=1}^{m} \binom{m}{k}(-1)^{k+1} \exp\left(-k \mu_L r^{\alpha_L}\left(\sigma^2+I\right) \right)\right]\nonumber\\
&\hspace{0pt} \times f_{R_L}(r) dr\nonumber\\
\overset{(b)}{=}& \int_{h}^{\infty}  \sum_{k=1}^{m} \binom{m}{k}(-1)^{k+1} \exp\left(-k\mu_L\sigma^2 r^{\alpha_L}\right) \nonumber\\
&\hspace{0pt} \times \E_I \left[ \exp\left(-k\mu_L  r^{\alpha_L} I\right) \right] f_{R_L}(r) dr\nonumber\\
=& \sum\limits_{k=1}^{m} \binom{m}{k}(-1)^{k+1} \int_{h}^{\infty} \exp\left(-k\mu_L\sigma^2 r^{\alpha_L}\right) \nonumber\\
&\hspace{0pt} \times \mathcal{L}_I\left(k\mu_L r^{\alpha_L}\right) f_{R_L}(r) dr,
\end{align}}%
where $\mu_L=\alpha m T \zeta_L^{-1}$, (a) follows from the binomial theorem and the assumption that $m$ is an integer, and (b) results from the linearity of the expectation. Finally, $\mathcal{L}_I\left(k\mu_L\sigma^2 r^{\alpha_L}\right)$ is obtained from (\ref{Eq:LaplaceTransform}) where $v_1(r)=\sqrt{d_N^2-h^2}$ and ${v_2(r)=\sqrt{r^2-h^2}}$ which completes the proof of (\ref{Eq:CondCovgeLoS}). Similarly, %the conditional coverage probability 
$P_{C,N}$ can be derived by following the same approach as that of $P_{C,L}$ and by setting $m=1$. Therefore, we omit the detailed proof of (\ref{Eq:CondCovgeNLoS}). 
%Given that the typical user is associated with a NLoS UAV-BS, the conditional coverage probability $P_{C,N}$ can be derived by following the same approach as that of $P_{C,L}$ and by setting $m=1$. Therefore, we omit the detailed proof of (\ref{Eq:CondCovgeNLoS}). 

\section{proof of theorem \ref{thm:Rate}}\label{proof:AverageRate}
The average achievable rate is given by 
\begin{align}
\tau=&\E \left[\ln \left(1+\textsf{SINR}\right)\right]\overset{(a)}{=}\int_{y>0}\Prb\left(\ln \left(1+ \textsf{SINR}\right)>y\right)dy\nonumber\\
\overset{(b)}{=}&\int_{y>0}\Prb \left(\ln \left(1+ \textsf{SINR}\right)>y|x_o\in\Phi^L\right)dy A_L+\nonumber\\
&\int_{y>0}\Prb\left(\ln \left(1+ \textsf{SINR}\right)>y|x_o\in\Phi^N\right)dy A_N\nonumber\\
=&\tau_L A_L + \tau_N A_N,
\end{align}
where (a) follows from the fact that for a positive random variable $X$, we have $\E[X]=\int_{y>0}\Prb(X>y)dy$ \cite{andrews2011tractable}, and (b) follows from the law of total probability and linearity of integrals. 
Now, given that the typical user is associated with a LoS UAV-BS%(i.e., $x_o\in\phi^L$)
, the conditional average rate is given by
{\small\begin{align}
\tau_L\overset{(a)}{=}&\int_{y>0}\Prb\left(\ln \left(1+ \frac{\zeta_L G_{x_o} R_L^{-\alpha_L} }{\sigma^2+I}\right)>y\right)dy\nonumber\\
%\overset{(b)}{=}&\int_{y>0}\int_{r\geq h}\Prb\left(G_{x_o}> \zeta_L^{-1} r^{\alpha_L}(e^y-1) (\sigma^2+I)\bigg) \nonumber\\
%&\times f_{R_L}(r)drdy\\
\overset{(b)}{=}&\int\limits_{y>0}\int\limits_{r \geq h} \E_I \left[\Prb\left(G_{x_o}> \zeta_L^{-1} r^{\alpha_L}(e^y-1) (\sigma^2+I)\right)\right]\nonumber\\
& \times f_{R_L}(r)dr dy\nonumber\\
\overset{(c)}{=}&\int\limits_{y >0} \int\limits_{r\geq h} \E_I\left[ 1-\left(1-\exp\left(-\rho_L r^{\alpha_L}(e^y-1) (\sigma^2+I) \right)\right)^m\right]\nonumber\\ 
&\times f_{R_L}(r) dr dy\nonumber\\
\overset{(d)}{=}&\int\limits_{y >0} \int\limits_{r\geq h} \E_I \Bigg[ \sum\limits_{k=1}^{m} \binom{m}{k}(-1)^{k+1} \nonumber\\
&\exp\left(-k \rho_L r^{\alpha_L}(e^y-1) (\sigma^2+I)\right) \Bigg] f_{R_L}(r) dr dy\nonumber\\
\overset{(e)}{=}&\sum\limits_{k=1}^{m} \binom{m}{k}(-1)^{k+1} \int\limits_{y >0} \int\limits_{r\geq h} \exp\left(-k \rho_L \sigma^2 r^{\alpha_L}(e^y-1)\right)\nonumber\\
 &\E_I\left[\exp\left(-k \rho_L r^{\alpha_L}(e^y-1)I\right) \right] f_{R_L}(r) dr dy\nonumber\\
 \overset{(f)}{=}&\sum\limits_{k=1}^{m} \binom{m}{k}(-1)^{k+1} \int\limits_{r\geq h} \exp\left(k \rho_L \sigma^2 r^{\alpha_L}\right) \nonumber\\
 \times\int\limits_{y >0}  &\exp\left(-k \rho_L \sigma^2 r^{\alpha_L}e^y \right)\mathcal{L}_I \left(k \rho_L r^{\alpha_L}(e^y-1)\right) dy f_{R_L}(r) dr,\nonumber
\end{align}}%
where $\rho_L=\alpha m \zeta_L^{-1}$, (a) follows from (\ref{Eq:SINR}), (b) follows from conditioning on $R_L=r$ and taking the conditional expectation with respect to interference, (c) is from the upper bound of Gamma distribution given in (\ref{Eq:GammaBound}), (d) results from the binomial theorem and the assumption that $m$ is an integer, (e) follows from the linearity of integrals and expectation, and the last step results from swapping the integration orders. Finally, plugging (\ref{Eq:LaplaceTransform}) into (f) when $s=k \rho_L r^{\alpha_L}(e^y-1)$ completes the proof of (\ref{Eq:CondRtaeLoS}). The average achievable rate given that the typical user is associated with a NLoS UAV-BS ($\tau_N$) can be derived by following the same approach as that of $\tau_L$ and by setting $m=1$. Therefore, we omit the detailed proof of (\ref{Eq:CondRtaeNLoS}).

\end{appendices}

\bibliographystyle{IEEEtran}
\bibliography{IEEEfull,RefList}
\end{document}